\newtheorem{corollary}{Corollary}
\newtheorem{definition}{Definition}
\newtheorem{theorem}[definition]{Theorem}
\newtheorem{lemma}[definition]{Lemma}
\newtheorem{claim}[definition]{Claim}
\newenvironment{proofof}[1]{\smallskip
\noindent {\bf Proof of #1.  }}{\hfill$\Box$}
\def\floor#1{\left\lfloor #1 \right\rfloor}
\def\ceil#1{\left\lceil #1 \right\rceil}
\begin{document}

\title{On Exact Algorithms for Permutation CSP\thanks{This work is partially supported by the ANR project AGAPE (ANR-09-BLAN-0159).}}

\author{Eun Jung Kim\footnote{LAMSADE, Univ. Paris-Dauphine and CNRS, France. \texttt{Email:\small{\{eun-jung.kim@dauphine.fr\}}}} and
Daniel Gon\c{c}alves\footnote{LIRMM, Univ. Montpellier 2 and CNRS, France. \texttt{Email:\small{\{daniel.goncalves@lirmm.fr\}}}} }

\date{}

\maketitle

\begin{abstract}
In the {\sc Permutation Constraint Satisfaction Problem} ({\sc Permutation
CSP}) we are given a set of variables $V$ and a set of constraints
${\mathcal C}$, in which constraints are tuples of elements of $V$. The
goal is to find a total ordering of the variables, $\pi\ : V
\rightarrow [1,\ldots,|V|]$, which satisfies as many constraints as
possible.  A constraint $(v_1,v_2,\ldots,v_k)$ is satisfied by an
ordering $\pi$ when $\pi(v_1)<\pi(v_2)<\ldots<\pi(v_k)$. An instance
has arity $k$ if all the constraints involve at most $k$ elements.

This problem expresses a variety of permutation problems including
{\sc Feedback Arc Set} and {\sc Betweenness} problems. A naive
algorithm, listing all the $n!$ permutations, requires $2^{O(n\log{n})}$
time. Interestingly, {\sc Permutation CSP} for arity
$2$ or $3$ can be solved by Held-Karp type algorithms in time
$O^*(2^n)$, but no algorithm is known for arity at least $4$ with
running time significantly better than $2^{O(n\log{n})}$. In this
paper we resolve the gap by showing that {\sc Arity 4 Permutation CSP} cannot be solved in time $2^{o(n\log{n})}$ unless ETH fails.
\end{abstract}

\section{Introduction}

Many combinatorial problems are intractable in the sense that they are unlike to have polynomial-time algorithms. One possible strategy to deal with intractability is to design {\em moderately exponential-time algorithms}. Such algorithms solve the problems optimally on any given instance. Even though exponentially many steps are required in the worst case, an exact algorithm with a slow-growing runtime function may work quite well in practice. From theoretic viewpoint, an $O^*(1.999^n)$-time algorithm is better than $O^*(2^n)$ algorithm, which is in turn better than an $O^*(n!)$-time algorithm.

The study of moderately exponential-time algorithms can be traced back to the $O^*(2^n)$-time algorithm for {\sc Hamiltonian Cycle} by Held and Karp in 1962. Ever since the $O^*(2^n)$ worst-case bound seemed to be impenetrable for almost fifty years till $O^*(1.657^n)$-time (randomized) algorithm became known by Bj\"{o}rklund in \cite{B10}. This is only a part of the success story in search for faster exact algorithms, especially in the last decade. Examples include the $O^*(2^{\omega n/3})$-time algorithm, where $\omega$ is the matrix multiplication exponent, for {\sc Max-2-CSP} \cite{W05}, a sequence of algorithmic development for {\sc Coloring} culminating in $O^*(2^n)$-time algorithm \cite{BHK09}, the very recent $O^*(c^n)$-time algorithm for {\sc Scheduling} \cite{CPPW11} for $c<2$ and lots more.

%The resistance of $2^n$ barrier (and its breakdown) is repeatedly observed in combinatorial problems.

The resistance of $2^n$ barrier (and its breakdown) is repeatedly observed in combinatorial problems. While remarkable algorithmic improvement has been made, for fundamental problems such as {\sc Circuit SAT}, {\sc TSP} and {\sc Coloring}, the (asymptotic) $O^*(2^n)$ runtime remains the current best. More generally, we ask what would be the lower bounds for combinatorial problems. It is widely believed that certain NP-complete problems such as {\sc 3-Coloring}, {\sc Independent Set}, {\sc 3-Sat} are not likely to have subexponential-time algorithms. This assumption is called \texttt{Exponential Time Hypothesis (ETH)} and it serves as a common ground to prove a number of hardness results.

\newenvironment{ETH}[1][Exponential Time Hypothesis:]{\begin{trivlist}
\item[\hskip \labelsep {\bfseries #1}]}{\end{trivlist}}
%\hskip
%\newtheorem{ETH}{Exponential Time Hypothesis}
\begin{ETH}
There exists an $\epsilon > 0$ such that no algorithm
solves {\sc 3-Sat} in time $2^{\epsilon n}$, where $n$ is the number
of variables.
\end{ETH}

%In contrast to the success story in search for faster exact algorithms, there is not much known about the lower bounds (under likely complexity assumptions). To be fair, We must be fair to mention the recent result by now we wonder to for example. and its breakthrough for Hamiltonicity is one of a wide range of problems Williams.  and  back to 1960s function for the running time we shall prefer an algorithm running in time $O(2^n)$ to an algorithm running in time  The design of faster exponential-time algorithms is one One widely studied  way to cope with intractability is to  combinatorial problems, there have bee  we are interested in

In this paper, we are interested in a family of problems called {\sc Permutation CSP}. In the problem {\sc Permutation CSP}, we are given a set of variables $V$ and a set of constraints
${\mathcal C}$, which constraints are tuples of elements of $V$. The
goal is to find a total ordering of the variables, $\pi\ : V
\rightarrow [1,\ldots,|V|]$, which satisfies as many constraints as
possible.  A constraint $(v_1,v_2,\ldots,v_k)$ is satisfied by an
ordering $\pi$ when $\pi(v_1)<\pi(v_2)<\ldots<\pi(v_k)$. An instance
has arity $k$ if all the constraints involve at most $k$ elements. In case the arity is bounded by $k$, we call the problem {\sc Arity $k$ Permutation CSP}.

The {\sc Permutation CSP} is NP-complete even when restricted to instances of arity two, which is also known as {\sc Maximum Acyclic Subgraph} or {\sc Feedback Arc Set}. A trivial algorithm for arbitrary arity considers every possible ordering of $V$ and counts the number of satisfied constraints.
This requires $O(n!n|{\mathcal C}|)$ time, which is $2^{O(n\log{n})}$. However, we can do much better on instances with arity up to three using standard dynamic programming. For example, it is fairly straightforward to apply the framework of \cite{BFK11,KP10} and obtain $2^n$-time algorithm.

%However, far better algorithms are available for instances with arity up to three. In~\cite{BFK11,KP10}, such algorithms are presented for arity $2$ and arity $3$ (in the setting of~\cite{KP10} $F_j(A,x) = |\{(y,x,z)\in {\mathcal C}\ |\ y\in A \ \text{and}\ z\notin A\}|$), with time complexity $O^*(T^n)$ and space complexity $O^*(S^n)$, for various values of $T\ge 2$ and $2\ge S\ge \sqrt{2}$ with $TS\le 4$.

Although it is not difficult to design a dynamic programming over subsets for arity up to three, it is not clear how we can proceed with arity four and so on. Is it possible to have such an algorithm, or {\em any} algorithm of runtime $O^*(c^n)$ for some constant $c$? We answer this question in the negative. The following theorem summarizes our main result stating that such improvement is impossible under ETH.
%For arity $k\geq 4$, no algorithm is known which allows significantly better running time than the trivial $2^{O(n\log{n})}$-algorithm. In this paper, we prove that such improvement is impossible under Exponential Time Hypothesis (ETH).

\begin{theorem}
\label{th-main}
Assuming ETH, there is no $2^{o(n\log{n})}$-algorithm for {\sc Arity 4 Permutation CSP} (and thus for {\sc Arity $k$ Permutation CSP}, $k\ge 4$).
\end{theorem}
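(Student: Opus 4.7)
\bigskip
\noindent
The natural strategy is to reduce from a permutation-style problem that is known, via the slightly super-exponential framework of Lokshtanov, Marx and Saurabh, to require $2^{\Omega(k\log k)}$ time under ETH, into \textsc{Arity 4 Permutation CSP} on $n = O(k)$ variables; since then $n\log n = O(k\log k)$, any $2^{o(n\log n)}$ algorithm would collapse to a $2^{o(k\log k)}$ algorithm for the source, contradicting ETH. A convenient source problem is what one may call ``$k\times k$ Permutation Binary CSP'': we have $k$ ``row'' variables with domain $[k]$ and a list of forbidden pairs $((i,a),(j,b))$, and the question is whether there is an assignment that is a permutation of $[k]$ and that avoids every forbidden pair. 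This problem inherits the $2^{\Omega(k\log k)}$ lower bound by a standard grouping of SAT variables into $k$ groups of $\log k$ variables.

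First I would set up the encoding gadget on $n = 2k+1$ Permutation CSP variables: $k+1$ separators $p_0, p_1, \dots, p_k$ and $k$ row selectors $r_1, \dots, r_k$. Arity-$2$ constraints $(p_{i-1},p_i)$ with very large multiplicity act as hard constraints and force the chain $\pi(p_0) < \pi(p_1) < \dots < \pi(p_k)$ in any near-optimal ordering, partitioning the positions into $k$ ``column intervals''. The column assigned to row $i$ is then read off as the unique $c$ with $\pi(p_{c-1}) < \pi(r_i) < \pi(p_c)$.

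Next I would encode each forbidden pair via arity-$4$ tuples. Given the separator chain, the configuration ``row $i$ in column $a$ and row $j$ in column $b$'' (with $a<b$) is equivalent to the pattern $\pi(p_{a-1}) < \pi(r_i) < \pi(p_a) < \dots < \pi(p_{b-1}) < \pi(r_j) < \pi(p_b)$, which is pinpointed by the conjunction of the two arity-$4$ constraints $(p_{a-1},r_i,p_a,r_j)$ (``$r_i$ is in column $a$ and $r_j$ is strictly later'') and $(r_i,p_{b-1},r_j,p_b)$ (``$r_j$ is in column $b$ and $r_i$ is strictly earlier''). Dually, for each \emph{allowed} $(i,j,a,b)$ I would install positive reward constraints of similar shape, so that the maximum number of satisfied constraints hits a precomputed threshold if and only if every pair of rows ends up in an allowed configuration.

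The hard part will be the bookkeeping. \textsc{Permutation CSP} awards only positive credits, so one cannot literally penalize a forbidden configuration; instead the positive reward gadgets must be arranged so that exactly one allowed-pair credit is lost for each triggered forbidden configuration, and the multiplicities of the ``separator chain'' and ``unique column'' constraints must be calibrated so that deviant orderings---separators out of order, two selectors sharing a column, or selectors placed outside all intervals---fall strictly below the threshold. Once this accounting is verified, the linear blow-up $n = 2k+1$ gives $n\log n = \Theta(k\log k)$, and the ETH lower bound for the source transfers directly, establishing Theorem~\ref{th-main}.
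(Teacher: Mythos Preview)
Your outline matches the paper's high-level picture, but the step you label ``the hard part'' is not mere bookkeeping: for a generic source instance it cannot be made to work, and this is exactly where the paper spends all of its technical effort. The obstruction is that once you split the predicate ``$r_i$ in column $a$ and $r_j$ in column $b$'' into two arity-4 halves such as $(p_{a-1},r_i,p_a,r_j)$ and $(r_i,p_{b-1},r_j,p_b)$, the first half does not mention $b$ and the second does not mention $a$. So if you add reward copies for every allowed pair, placing $r_i$ in column $c_i$ and $r_j$ in column $c_j$ earns credit proportional to (number of allowed $(i,c_i,j,\ast)$) $+$ (number of allowed $(i,\ast,j,c_j)$). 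This is \emph{not} a constant plus the indicator that $(i,c_i,j,c_j)$ is allowed; it depends on the row-to-row degree profile of the source instance. For a generic $k\times k$ Permutation Binary CSP these degree profiles vary with the column, so the optimum of your Permutation CSP need not correspond to a feasible assignment and no threshold separates yes from no. No calibration of multiplicities repairs this, because the defect is inside the reward layer, not in the structural layer.

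The paper's fix is to manufacture a source instance with two extra properties: (A) for every pair of rows $i,k$, every vertex in row $i$ has the \emph{same} number of neighbours in row $k$; and (B) consecutive vertices $(i,j)$ and $(i,j{+}1)$ have identical neighbourhoods in all but $D=O(1)$ rows. Property (A) is precisely what makes the half-credit terms above constant; combined with four symmetric arity-4 constraints per edge (types $crcr$, $crrc$, $rcrc$, $rccr$) the count telescopes to $\text{const}+|E(H[V_\phi])|$. Property (B) is needed so that the structural constraints suffice to force the separators into order despite the reward constraints pulling the other way. Obtaining (A) and (B) is the bulk of the paper: one goes through the Sparsification Lemma to $f$-Sparse 3-SAT, then to bounded-degree 3-Coloring, then uses Brooks' theorem on $G^2$ to partition vertices into distance-3 classes and a ternary Gray code to enumerate the $3$-colourings of each class so that (A) and (B) hold, and finally passes to a bipartite ($2n\times 2n$ biclique) version so that every edge constraint has the shape $j\le n<j'$. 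Your proposal supplies none of this structure, so the reduction as sketched does not go through.
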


Our result is built on two previous results: Impagliazzo et al.~\cite{IPZ01} and Lokshtanov et al.~\cite{LMS11}. In \cite{LMS11}, the authors prove a computational lower bound for the {\sc $n\times n$ Clique} problem and transfer the lower bound to those of other natural combinatorial problems. The {\sc $n\times n$ Clique} is designed so that an improvement over the brute-force search would contradict ETH. It is not difficult to build a reduction from {\sc $n\times n$ Clique} to {\sc Permutation CSP} with arity six, thus showing Theorem \ref{th-main} holds for arity $k\ge 6$. The technical difficulty arises when we try to get down the arity down to four. To do this, we resort to the Sparsification Lemma of \cite{IPZ01} and construct a variation of {\sc $n\times n$ Clique} with strictly constrained properties.

To be more specific, we give a sequence of reductions starting from an instance of {\sc 3-Sat} in which the maximum frequency (i.e. the number of clauses containing a variable) is bounded by a fixed constant $f$. Such a {\sc 3-Sat} instance is reduced to a {\sc 3-Coloring} of degree bounded by $f'$. From {\sc 3-Coloring}, we construct an instance of {\sc $n\times n$ Clique} along the line of \cite{LMS11}. In order to construct an {\sc $n\times n$ Clique} instance constrained in a subtle manner, we use Brooks' theorem and ternary grey code. Finally from such an instance of {\sc $n\times n$ Clique}, we reduce to {\sc Arity 4 Permutation CSP}. The whole chain of reductions are designed so that an $O^*(2^{o(n\log{n})})$-time algorithm for {\sc Arity 4 Permutation CSP} implies that we can solve {\sc 3-Sat} instance of frequency $f$ in time $O(2^{o(n)})$ for {\em any given} $f$, which is unlikely.

%The Sparsification Lemma states that for {\em any} $\epsilon$, one can produce a family of $2^{\epsilon n}$ {\sc 3-Sat} instances, each of which has a frequency bounded by a function depending only on $\epsilon$. It follows that an $O^*(2^{o(n\log{n})})$-time algorithm for {\sc Arity 4 Permutation CSP} can be used to solve {\sc 3-Sat} with arbitrarily small exponent.

An interesting dichotomy is observed as a corollary of our main result.

\begin{corollary}
The problem {\sc Arity $k$ Permutation CSP} can be solved in time $O^*(2^n)$ if $k\leq 3$. Otherwise, such an algorithm is unlikely to exist under ETH.
\end{corollary}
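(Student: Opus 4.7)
The corollary combines Theorem~\ref{th-main} with a known algorithmic result, so I would treat the two directions separately.

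For the hardness direction ($k \geq 4$), the plan is to chain the observation $2^n = 2^{o(n \log n)}$ (valid since $n = o(n \log n)$) with Theorem~\ref{th-main}: a hypothetical $O^*(2^n)$-time algorithm for {\sc Arity 4 Permutation CSP} would in particular run in $2^{o(n \log n)}$ time and thereby refute ETH. The extension from $k = 4$ to $k \geq 4$ is immediate since any arity-$4$ instance is vacuously an arity-$k$ instance (every constraint uses at most $k$ variables).

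For the algorithmic direction ($k \leq 3$), I would invoke a Held--Karp-style subset dynamic program, exactly as alluded to in the introduction via the framework of \cite{BFK11,KP10}. The plan is to maintain a table $T[S]$, one entry per $S \subseteq V$, where $T[S]$ equals the maximum number of constraints satisfiable by an ordering of $S$ placed in positions $1, \ldots, |S|$, with recurrence $T[S] = \max_{v \in S}\bigl(T[S \setminus \{v\}] + g(S,v)\bigr)$ in which $g(S,v)$ accounts for the constraints whose largest element is $v$ and whose remaining elements lie in $S \setminus \{v\}$. For arities $\leq 2$ the quantity $g(S,v)$ depends only on the pair $(S,v)$ (each arity-$2$ constraint $(u,v)$ becomes satisfied exactly when $v$ is placed after some $u \in S \setminus \{v\}$), so the recurrence is direct and yields $O^*(2^n)$. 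For arity $3$ the delicate point is that the contribution of a triple $(u_1, u_2, v)$ nominally depends on the internal order of $S \setminus \{v\}$; the cited framework absorbs this with a suitable cost decomposition so that each step's contribution is a function of the current subset alone, preserving the $O^*(2^n)$ bound.

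The corollary is best viewed as a packaging statement: the genuine technical difficulty lies entirely in Theorem~\ref{th-main}, and the only mildly subtle point on the algorithmic side is the arity-$3$ bookkeeping, which is standard and fully handled by the cited framework.
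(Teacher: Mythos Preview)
Your proposal is correct and matches the paper's treatment: the corollary is stated without proof, being an immediate combination of the cited $O^*(2^n)$ dynamic programming for arity $\le 3$ \cite{BFK11,KP10} and Theorem~\ref{th-main} for the hardness side. One small remark on your sketch of the arity-$3$ DP: your definition of $g(S,v)$ as counting constraints whose \emph{largest} element is $v$ is the step that, as you note, seems to depend on the internal order of $S\setminus\{v\}$; the clean fix is to credit a triple $(u_1,u_2,u_3)$ at the moment its \emph{middle} element $u_2$ is placed (i.e.\ when $u_1\in S\setminus\{u_2\}$ and $u_3\notin S$), which makes $g(S,v)$ a function of $(S,v)$ alone---this is exactly the ``suitable cost decomposition'' you allude to.
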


In the next section we give some definitions and a simple proof of the
fact that {\sc Arity 6 Permutation CSP} has no
$2^{o(n\log{n})}$-algorithm, unless ETH fails. Then we prove
Theorem~\ref{th-main} in Section~\ref{sec:arity4}.

\section{Warm-up}
\label{sec:preliminary}

Let us define the {\sc $n\times n$ Clique} Problem introduced by
Lokshtanov et al.~\cite{LMS11}.

\begin{quote}
{\sc $n\times n$ Clique}\\
{\bf Input:} a graph $G$ with vertex set $V(G) = [n]\times [n]$. (The vertex $(i,j)$ is said to be in
row $i$ and column $j$.) \\
{\bf Goal:} Determine if there exists an $n$-clique in $G$ with exactly one element from each row.
\end{quote}

\begin{theorem}[Lokshtanov et al.~\cite{LMS11}]
\label{ETHclique}
Assuming ETH, there is no $2^{o(n\log{n})}$-time algorithm for {\sc
  $n\times n$ Clique}.
\end{theorem}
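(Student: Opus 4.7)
The plan is to reduce 3-SAT to $n \times n$ Clique in such a way that an $O^*(2^{o(n \log n)})$ algorithm for the latter would yield an $O^*(2^{o(N)})$ algorithm for 3-SAT on $N$ variables, contradicting ETH. To make the parameters line up exactly, I would first apply the Sparsification Lemma of Impagliazzo--Paturi--Zane to assume that the 3-SAT formula $\varphi$ has only $m = O(N)$ clauses, and then convert $\varphi$ into an equivalent 3-Coloring instance on a graph $H$ with $N' = \Theta(N)$ vertices via the standard linear-size gadget construction. ETH then implies that $H$ cannot be 3-colored in time $2^{o(N')}$.

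The main construction goes as follows. Pick the largest integer $k$ with $3^k \le n$, so $k = \Theta(\log n)$, and partition $V(H)$ into $n$ groups $V_1, \ldots, V_n$ with $|V_r| \le k$ and $\sum_r |V_r| = N'$; by choosing $n$ so that $N' = \Theta(n \log n)$ this can be done. For each row $r$ of the $n \times n$ Clique instance $G$, create one vertex $(r, c)$ for every proper 3-coloring $c$ of $H[V_r]$ (padding each row with isolated vertices up to exactly $n$ vertices, which is fine since at most $3^{|V_r|} \le 3^k \le n$ colorings ever arise). For $r_1 \ne r_2$, place an edge between $(r_1, c_1)$ and $(r_2, c_2)$ precisely when the combined assignment $c_1 \cup c_2$ is proper on every edge of $H$ with one endpoint in $V_{r_1}$ and the other in $V_{r_2}$.

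Correctness is direct: an $n$-clique of $G$ using exactly one vertex per row pieces together, by pairwise consistency, into a proper 3-coloring of $H$; conversely, a proper 3-coloring of $H$ restricts on each group to a vertex of the corresponding row, and these vertices form an $n$-clique. For the running-time bound, any $2^{o(n \log n)}$-time algorithm deciding the $n \times n$ Clique instance would, by $n \log n = \Theta(N')$, decide the underlying 3-Coloring problem in time $2^{o(N')} = 2^{o(N)}$, and the sparsification step then pushes this back to a $2^{o(N)}$-time algorithm for 3-SAT, contradicting ETH.

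The main obstacle, and the reason for routing through sparsification and an arity-$2$ intermediate problem, is keeping the reduction both tight and local. One needs a base problem that is ETH-hard in terms of $\Theta(N)$-size inputs (provided by sparsification), and constraints that each mention at most two groups so that pairwise edges of $G$ can enforce them. A direct reduction from 3-SAT stumbles on the second point because a single 3-clause may span three distinct groups and is not expressible by any single edge of $G$; passing through 3-Coloring, whose constraints are the edges of $H$, sidesteps this issue cleanly and yields the matching $2^{\Omega(n \log n)}$ lower bound.
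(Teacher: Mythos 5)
The paper does not prove this theorem itself but cites it from Lokshtanov--Marx--Saurabh; your reconstruction follows exactly the approach used there and mirrors the paper's own more constrained variant in Lemma~\ref{ETH-C} (sparsify, pass to 3-Coloring, group $\Theta(\log n)$ vertices per row, one vertex per proper partial coloring, edges for pairwise compatibility). The only point to tighten is the final running-time accounting: the Sparsification Lemma yields a disjunction of up to $2^{\eps N}$ sparse formulas for an arbitrary $\eps>0$, so one must choose $\eps$ as a function of the target exponent and observe that the group size bound (hence the constant in $n\log n = \Theta(N)$) depends on $\eps$; the argument then shows 3-SAT is solvable in $2^{\delta N}$ for every $\delta>0$, contradicting ETH, rather than producing a single $2^{o(N)}$ algorithm directly.
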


Using this result we prove the following.
\begin{theorem}
\label{th-arity6}
Assuming ETH, there is no $2^{o(n\log{n})}$-algorithm for {\sc Arity 6 Permutation CSP}.
\end{theorem}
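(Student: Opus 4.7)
The plan is to reduce from \textsc{$n\times n$ Clique}, which by Theorem~\ref{ETHclique} has no $2^{o(n\log n)}$-time algorithm under ETH. Given a graph $G$ on $V(G)=[n]\times[n]$, I will build an \textsc{Arity 6 Permutation CSP} instance on $N=2n+1$ variables such that $G$ admits an $n$-clique with one vertex per row if and only if some permutation satisfies at least a prescribed number $T$ of constraints. Since $N\log N=\Theta(n\log n)$, any $2^{o(N\log N)}$-time algorithm for the CSP would solve \textsc{$n\times n$ Clique} in $2^{o(n\log n)}$ time, contradicting ETH.

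The variables are $n+1$ \emph{separators} $s_0,\ldots,s_n$ and $n$ \emph{row markers} $r_1,\ldots,r_n$. The intended semantics is $s_0<s_1<\cdots<s_n$, and then row $i$ picks the column $f(i)\in[n]$ for which $s_{f(i)-1}<r_i<s_{f(i)}$. To encode the clique condition, for every pair of rows $i<i'$ and every edge $\{(i,j),(i',j')\}$ of $G$ I add a constraint tuple over the at most six variables $s_{j-1},r_i,s_j,s_{j'-1},r_{i'},s_{j'}$ ordered to express "$f(i)=j\land f(i')=j'$"; the tuple shrinks to arity five when $|j-j'|=1$ and to arity four when $j=j'$ (in which case two tuples are added, one for each relative order of $r_i$ and $r_{i'}$ inside the common bin). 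All constraints have arity at most six.

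To force the separator order, I further include $M=n^5$ identical copies of each adjacent pair constraint $(s_{i-1},s_i)$ and set the target to $T=Mn+\binom{n}{2}$. If $G$ has the required clique $f$, placing the separators in order and each $r_i$ into bin $f(i)$ satisfies all $Mn$ separator copies, plus exactly one row-pair tuple for each of the $\binom{n}{2}$ pairs (the one encoded by the edge between rows $i$ and $i'$), meeting $T$. Conversely, the total number of row-pair tuples is at most $\binom{n}{2}\cdot(n^2+n)=O(n^4)<M$, so violating even one separator constraint costs at least $M$ and cannot be compensated by the whole row-pair budget; hence any permutation reaching $T$ has the separators in order, each $r_i$ lies in a unique bin, at most one row-pair tuple fires per pair, and $T$ is attained only when every assignment $(f(i),f(i'))$ is an edge, i.e., $f$ encodes an $n$-clique.

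The delicate point is that misordered separators could let a single $r_i$ straddle several "generalized bins" and potentially satisfy many row-pair tuples for one pair. I sidestep this combinatorial subtlety by choosing $M$ large enough to dominate the crude total bound $O(n^4)$ on row-pair tuples rather than analysing per-pair contributions finely. The whole construction is polynomial in $n$, uses $N=2n+1$ variables, and has arity 6, yielding the claimed ETH lower bound.
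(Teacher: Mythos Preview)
Your high-level strategy matches the paper's: reduce from \textsc{$n\times n$ Clique}, introduce row variables $r_i$ and column/separator variables, encode each edge of $G$ by an arity-$\le 6$ tuple that forces the two row variables into the corresponding bins, and argue that in any optimal ordering the separators are correctly ordered and then at most one edge-tuple per row pair can be satisfied, with equality exactly when the chosen bins describe an $n$-clique.

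There is, however, a genuine gap. In this paper $\mathcal{C}$ is a \emph{set} of tuples, not a multiset, so your ``$M=n^5$ identical copies'' of each $(s_{i-1},s_i)$ collapse to a single constraint. With only $n$ (or even $\binom{n+1}{2}$) distinct separator constraints available, violating all of them costs far less than the $O(n^4)$ edge tuples present, and your domination argument breaks down; a permutation could scramble the separators and potentially harvest many edge tuples. This is precisely why the paper introduces $2n$ dummy elements $d_1,\ldots,d_{2n}$ and uses the arity-6 structural constraints $\mathcal{C}_S^1=\{(d_a,d_b,d_c,d_d,c_j,c_{j'}) : a<b<c<d,\ j<j'\}$: these are $\binom{2n}{4}\binom{n+1}{2}$ \emph{distinct} constraints, and misplacing any two column elements already kills $\binom{2n}{4}=\Theta(n^4)$ of them, which dominates the at most $\binom{n^2}{2}$ edge constraints. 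The dummies cost only $O(n)$ extra variables, so the reduction size stays linear. If you replace your repeated arity-2 constraints by this dummy device, your argument goes through and is then essentially identical to the paper's proof (the paper additionally adds a small set $\mathcal{C}_S^2$ to pin each $r_i$ between $c_1$ and $c_{n+1}$, but as you noted this can also be handled by the per-pair counting).
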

\begin{proof}
To prove that, we show that one could use a
$2^{o(n\log{n})}$-algorithm for {\sc Arity 6 Permutation CSP} to
design a $2^{o(n\log{n})}$-time algorithm for {\sc $n\times n$
  Clique}, which does not exist if we assume ETH (by
Theorem~\ref{ETHclique}). This is due to an easy reduction from
{\sc $n\times n$ Clique} to {\sc Arity 6 Permutation CSP}. Note that in order to achieve the desired lower bound, the reduction needs to produce an {\sc
  Arity 6 Permutation CSP} instance $(V,\mathcal{C})$ with $|V|=O(n)$.

Given an instance $G$ of {\sc $n\times n$ Clique}, we build an instance $(V,\mathcal{C})$ containing $4n+1$ elements in $V$ as follows. There are (a) $n$ elements $r_i$, $i\in[n]$, corresponding to the rows of $G$, (b) $n$ elements $c_i$, $i\in[n+1]$, corresponding to the columns of $G$ (except $c_{n+1}$ that does not exactly correspond to a column), and (c) $2n$ 'dummy' elements $d_i$, $i\in[2n]$. The constraint set $\mathcal{C}$ is
the union of three types of constraints, the constraints $\mathcal{C}_G$ depending $G$ and the structural constraints
$\mathcal{C}^1_S$ and $\mathcal{C}^2_S$ that force the optimal
orderings to be of the form $d_1d_2\ldots d_{2n}c_1R_1c_2R_2\ldots
c_nR_nc_{n+1}$, where each $R_i$ is a (possibly empty) sequence of
$r_i$'s.

Formally, $\mathcal{C}^1_S =
\{(d_a,d_b,d_c,d_d,c_j,c_{j'})\ |\ \forall \ 1\le a<b<c<d\le 2n \text{
  and } \forall \ 1\le j<j'\le n+1 \}$, $\mathcal{C}^2_S =
\{(c_1,r_i,c_{n+1})\ |\ \forall i\in[n]\}$. $\mathcal{C}_G$ is such
that for every edge $(i,j)(i',j')\in E(G)$ there is a constraint
requiring $r_i$ and $r_{i'}$ to be respectively between $c_j$ and
$c_{j+1}$, and between $c_{j'}$ and $c_{j'+1}$. Since we avoid an element appearing more than once in a constraint, a constrain in $\mathcal{C}_G$ can be one of three types according to the value $j'-j$. $\mathcal{C}_G =
\{(c_j,r_i,c_{j+1},c_{j'},r_{i'},c_{j'+1})\ | \ \forall
(i,j)(i',j')\in E(G)\ \text{with} \ j+2\le j'\} \cup
\{(c_j,r_i,c_{j+1},r_{i'},c_{j+2})\ | \ \forall (i,j)(i',j+1)\in
E(G)\} \cup \{(c_j,r_i,r_{i'},c_{j+1})\ | \ \forall (i,j)(i',j)\in
E(G)\ \text{with} \ i<i'\}$.  Note that since we ask for one element
per row, we can assume that $G$ has no edge $(i,j)(i,j')$. Thus
$\mathcal{C}_G$ has exactly one constraint per edge.

\begin{claim}
Any optimal ordering is of the form $d_1d_2\ldots
d_{2n}c_1R_1c_2R_2\ldots c_nR_nc_{n+1}$, where each $R_i$ is a
(possibly empty) sequence of $r_i$'s. Thus in any optimal ordering all
the constraints of $\mathcal{C}^1_S\cup\mathcal{C}^2_S$ are satisfied.
\end{claim}
\begin{proof}
Notice that sequences of this type exist and that they satisfy all the
constraints in $\mathcal{C}^1_S\cup\mathcal{C}^2_S$. Note also that
since $G$ has at most ${n^2}\choose{2}$ edges,
$|\mathcal{C}_G|\le$${n^2}\choose{2}$.

If in some ordering two elements $d_i$ and $d_j$ are misplaced with
respect to each other, then the ${2n-2}\choose{2}$${n+1}\choose{2}$
constraints of $\mathcal{C}^1_S$ involving them are unsatisfied, and
this cannot be compensated by the $|\mathcal{C}_G|\le$${n^2}\choose{2}$
constraints in $\mathcal{C}_G$ (for a sufficiently large $n$).

Similarly, if in some ordering two elements $d_i$ and $c_j$ are
misplaced with respect to each other, then the ${2n-1}\choose{3}$$n$
constraints of $\mathcal{C}^1_S$ involving them are unsatisfied, and
this cannot be compensated by the $|\mathcal{C}_G|\le$${n^2}\choose{2}$
constraints in $\mathcal{C}_G$ (for a sufficiently large $n$).

Similarly again, if in some ordering two elements $c_i$ and $c_j$ are
misplaced with respect to each other, then the ${2n}\choose{4}$
constraints of $\mathcal{C}^1_S$ involving them are unsatisfied, and
this cannot be compensated by the $|\mathcal{C}_G|\le$${n^2}\choose{2}$
constraints in $\mathcal{C}_G$ (for a sufficiently large $n$).

Finally, if some $r_i$ lies before $c_1$ or after $c_{n+1}$, it cannot
be part of any satisfied constraint, contrary to the case where $r_i$
lies in between $c_1$ and $c_{n+1}$. In this case, at least
one satisfied constraint involves $r_i$, the one in $\mathcal{C}^2_S$.
Thus the optimal orderings are of the desired form.
\end{proof}

\begin{claim}
An optimal ordering satisfies at most ${n}\choose{2}$ constraints from
$\mathcal{C}_G$, and equality holds if and only if $G$ has an
$n$-clique with one element from each row.
\end{claim}
\begin{proof}
In an optimal ordering, for every $r_i$ there is a unique value,
$\phi(i)$, such that $r_i\in R_{\phi(i)}$ (i.e. $r_i$ lies in between
$c_{\phi(i)}$ and $c_{\phi(i)+1}$).  Thus, given a pair of row
elements, $r_i$ and $r_{i'}$, an optimal ordering satisfies at most
one constraint involving both elements (for example the constraint
$(c_{\phi(i)},r_i,c_{\phi(i)+1},c_{\phi(i')},r_{i'},c_{\phi(i')+1})$
if $\phi(i)+2\le \phi(i')$). This implies that a constraint involving
$r_i$ and $r_{i'}$ is satisfied if and only if
$(i,\phi(i))(i',\phi(i')) \in E(G)$. Thus, ${n}\choose{2}$ constraints
from $\mathcal{C}_G$ are satisfied only if the vertices $(i,\phi(i))$
with $i\in [n]$ form a $n$-clique with one vertex per row. Conversely,
if $G$ has a $n$-clique with one vertex per row, it is easy to order
the elements in such a way that ${n}\choose{2}$ constraints from
$\mathcal{C}_G$ are satisfied. Note that if both $(i,j)$ and $(i',j)$,
with $i<i'$, are in the $n$-clique one should put $r_i$ before
$r_{i'}$ (which are both between $c_j$ and $c_{j+1}$) to satisfy the
constraint $(c_j,r_i,r_{i'},c_{j+1})$ associated to the edge
$(i,j)(i',j)$. This concludes the proof of the claim.
\end{proof}
Thus an optimal ordering satifies
$|\mathcal{C}^1_S|+|\mathcal{C}^2_S|+ {{n}\choose{2}}$ constraints if and
only if $G$ has an $n$-clique with one element from each row, and this
concludes the proof of the theorem.
\end{proof}

\section{Main result}
\label{sec:arity4}

We have shown that a trivial enumeration algorithm for {\sc Arity 6 Permutation CSP} cannot be significantly improved under ETH by a reduction from {\sc $n\times n$ Clique}. In this section, we shall generalize this result to instances of arity four. For this, we successively establish lower bounds on several problems, and finally on {\sc Arity 4 Permutation CSP} assuming ETH.

In~\cite{LMS11}, the authors prove Theorem~\ref{ETHclique} by a
reduction from {\sc 3-Coloring} to {\sc $n\times n$ Clique} such that a $2^{o(n\log{n})}$-algorithm for {\sc $n\times n$ Clique}
implies $2^{o(n)}$-algorithm for {\sc 3-Coloring}. We follow the same
line of reduction, but we need to constrain {\sc $n\times n$ Clique} in a careful way so that we can finally reduce to {\sc Arity 4 Permutation CSP} instances.

In the following, we present algorithmic lower bounds on {\sc $f$-Sparse 3-Sat}, {\sc $f'$-Sparse 3-Coloring}, {\sc $D$-Degree Constrained $n\times n$ Clique}, and {\sc $D$-Degree   Constrained $2n\times 2n$ Biclique}. Then we prove
Theorem~\ref{th-main} by an appropriate reduction of {\sc $D$-Degree
  Constrained $2n\times 2n$ Biclique} to {\sc Arity 4 Permutation CSP}.

\begin{quote}
{\sc $f$-Sparse 3-Sat}\\
{\bf Input:} A 3CNF formula with $n$ variables and $m$ clauses in which each variable is contained in at most $f$ clauses.\\
{\bf Goal:} Determine if there is a satisfying assignment.
\end{quote}

\begin{quote}
{\sc $f'$-Sparse 3-Coloring}\\
{\bf Input:} A graph $G$ on $n$ vertices in which the maximum degree is at most $f'$.\\
{\bf Goal:} Determine if $G$ is 3-colorable.
\end{quote}

%The Sparsification Lemma by Impagliazzo et al. states that for every $\epsilon >0$, 3CNF formula on $n$ variables can be expressed as a disjunction of at most $2^{\epsilon n}$ 3CNF such that in each 3CNF, each variable is contained in at most $c(\epsilon)$ clauses for some function $c$ (see~\cite{IPZ01}, Corollary 1). Moreover, this disjunction can be constructed in time $2^{\epsilon n}poly(n)$. The following is a direct consequence of it.

The Sparsification Lemma by Impagliazzo et al. states that for every $\epsilon >0$, 3CNF formula on $n$ variables can be converted as a disjunction of at most $2^{\epsilon n}$ 3CNF such that in each 3CNF, the frequency bounded by a function depending only on $\epsilon$ (see~\cite{IPZ01}, Corollary 1). Moreover, this disjunction can be constructed in time $2^{\epsilon n}poly(n)$. The following is a direct consequence of it.

\begin{theorem}[Impagliazzo et al.~\cite{IPZ01}]\label{ETHs3s}
Assuming ETH, {\sc $f$-Sparse 3-Sat} cannot be solved in $2^{o(n)}$-time for every fixed $f>0$.
\end{theorem}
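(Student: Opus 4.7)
The plan is to give a blackbox reduction in the contrapositive direction: assuming the existence of a $2^{o(n)}$-time algorithm for $f$-Sparse 3-Sat for every fixed $f>0$, I will obtain a $2^{o(n)}$-time algorithm for arbitrary 3-Sat, contradicting ETH. The Sparsification Lemma does essentially all of the work; my task is merely to wire together its conclusions with the hypothetical algorithm.

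In detail, I would proceed as follows. Fix an arbitrary $\delta>0$ together with a 3-CNF formula $\phi$ on $n$ variables. Set $\epsilon=\delta/2$ and apply the Sparsification Lemma with parameter $\epsilon$: in time $2^{\epsilon n}\cdot\poly(n)$ one obtains at most $2^{\epsilon n}$ sub-formulas $\phi_1,\ldots,\phi_m$, each on at most $n$ variables, each $f(\epsilon)$-sparse (where $f(\epsilon)$ is the frequency bound supplied by the lemma, depending only on $\epsilon$), and such that $\phi$ is satisfiable iff some $\phi_i$ is. Now invoke the assumed algorithm $A_{f(\epsilon)}$ on each $\phi_i$; by hypothesis, each such call runs in $2^{o(n)}$ time, which for sufficiently large $n$ is at most $2^{\epsilon n}$. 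Summing over the $2^{\epsilon n}$ instances and including the construction cost yields total time $O(2^{2\epsilon n})=O(2^{\delta n})$. Since $\delta>0$ was arbitrary, 3-Sat would admit a $2^{o(n)}$-time algorithm, contradicting ETH.

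There is no genuine obstacle in this argument. The only mild subtlety is that the $o(n)$ function bounding the runtime of $A_f$ is allowed to depend on $f$, so one must fix $f=f(\epsilon)$ (which is determined once $\delta$ is chosen, independently of $n$) before letting $n\to\infty$; only then is the bound $2^{o(n)}\le 2^{\epsilon n}$ for large $n$ legitimate. With that order of quantifiers respected, the proof is a direct application of the Sparsification Lemma.
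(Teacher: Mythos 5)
Your argument is correct and is precisely the proof the paper has in mind: the paper presents the theorem as a ``direct consequence'' of the Sparsification Lemma and offers no further detail, and you have simply written out the standard sparsify-then-solve-each-branch argument, including the relevant quantifier subtlety (which the paper itself flags in the remark immediately after the theorem, confirming that the intended reading is ``not every $f$ admits a $2^{o(n)}$ algorithm'' rather than ``every $f$ is hard''). Nothing is missing.
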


Notice that the above theorem does not exclude the possibility of subexponential-time algorithm for {\em some} $f$. The following theorem follows from a slight modification of the well-known reduction \cite{S96}.

\begin{lemma}\label{ETHb3c}
Assuming ETH, {\sc $f'$-Sparse 3-Coloring} cannot be solved in $2^{o(n)}$-time for every fixed $f'>0$.
\end{lemma}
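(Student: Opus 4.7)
The plan is to adapt the classical Garey--Johnson--Stockmeyer reduction from 3-Sat to 3-Coloring so that, when applied to an $f$-sparse 3-Sat instance $\phi$ on $n$ variables and $m \le fn/3$ clauses, it outputs a graph $G$ on $N = O(n)$ vertices whose maximum degree is bounded by a constant $f'$ depending only on $f$. Given such a reduction, any $2^{o(N)}$-time algorithm for $f'$-Sparse 3-Coloring would yield a $2^{o(n)}$-time algorithm for $f$-Sparse 3-Sat, contradicting Theorem~\ref{ETHs3s}.

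Recall the standard reduction: three palette vertices $T, F, B$ form a triangle; for each variable $x$, two literal vertices $x, \bar{x}$ form a triangle with $B$; and for each clause $C = \ell_1 \vee \ell_2 \vee \ell_3$ a constant-size OR-gadget is attached to the three literal vertices of $C$ and to $T, F$. Since each variable appears in at most $f$ clauses, each literal vertex has degree at most $f + O(1)$, and each OR-gadget vertex has degree $O(1)$; the only vertices of super-constant degree are $T$, $F$, and $B$.

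To fix this I would replace each palette vertex by $N' = \Theta(n+m)$ copies $T_1, \ldots, T_{N'}$, $F_1, \ldots, F_{N'}$, $B_1, \ldots, B_{N'}$, arranging that each triple $\{T_i, F_i, B_i\}$ forms a triangle and that consecutive copies are linked by a constant-size equality gadget forcing $\chi(T_i) = \chi(T_{i+1})$ (and similarly for the $F$'s and $B$'s). A convenient equality gadget is a pair of triangles sharing an edge: if $T_i$ and $T_{i+1}$ both form a triangle with the same two auxiliary vertices $a_i, b_i$, then $\chi(T_i) = \chi(T_{i+1})$ is forced to be the third color, and each port collects only $O(1)$ new incidences per chaining. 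Every edge incident to a palette vertex in the original construction is then redirected to a distinct copy, so each copy ends up with $O(1)$ external neighbors.

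Correctness is immediate: the chains of equality gadgets force all $T_i$'s (resp.\ all $F_i$'s, all $B_i$'s) to share a common color, after which the analysis collapses to the classical one; the resulting graph has $N = O(n+m) = O(n)$ vertices and maximum degree bounded by a constant $f'$ depending only on $f$ (through $m \le fn/3$ and the gadget sizes). The main technical point is a careful choice of equality gadget and the bookkeeping needed to ensure that no port's degree blows up under chaining; all other parts of the reduction are routine. Composing this reduction with Theorem~\ref{ETHs3s} then yields the lemma.
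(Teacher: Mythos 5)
Your proposal is correct and follows essentially the same approach as the paper: both start from the standard 3-Sat~$\to$~3-Coloring reduction, observe that only the palette vertices $T,F,B$ can have super-constant degree when the source instance is $f$-sparse, and repair this by replicating the palette triangle $O(n+m)$ times while propagating the three roles consistently through a bounded-degree chain. The paper phrases this as ``expanding the triangle on $T,F,B$ into a triangulated ladder,'' while you make the propagation explicit via diamond ($K_4$ minus an edge) equality gadgets with fresh auxiliary vertices between consecutive copies; these are interchangeable implementations of the same idea and give the same $O(n+m)=O(n)$ vertex bound and constant degree bound.
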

\begin{proof}
We give a sketch of the polynomial-time reduction in \cite{S96} and point out how we modify the reduction in order to ensure the maximum degree. The reduction in \cite{S96} constructs a graph $G$ from a given 3CNF formula $\Phi$ as follows. The vertex set $V(G)$ contains (a) three vertices $T,F,B$ forming a triangle, (b) two literal vertices $v_i$ and $\bar{v}_i$ corresponding to each variable $x_i$ of $\Phi$, (c) an \texttt{OR}-gadget $C_j$ corresponding to $j$-th clause consisting of six vertices. Apart from the triangle on $T,F,B$, edges inside \texttt{OR}-gadget, the edge set $E(G)$ additionally connects (i) the pairs $v_i$ and $\bar{v}_i$, (ii) every literal vertex $v_i$/$\bar{v}_i$ with $N$, (iii) the literal vertex $v_i$/$\bar{v}_i$ with (a vertex from) \texttt{OR}-gadget $C_j$ whenever $v_i$/$\bar{v}_i$ appears in $C_j$, (iv) the vertex $out_j$ from \texttt{OR}-gadget $C_j$ with $N$ and $F$.

The graph $G$ (in particular, the \texttt{OR}-gadget) is designed so that $\Phi$ is satisfiable if and only if $G$ is 3-colorable. Recall that the vertices $T,F,N$ forms a triangle and thus has distinct colors in any 3-coloring. We say a vertex is assigned $T$ ($F,N$ respectively) if the vertex shares the same color with $T$ ($F,N$ respectively). Essentially, two properties of $G$ ensure this if-and-only-if relation. First, in any 3-coloring of $G$, if $v_i$ is assigned $T$ then $\bar{v}_i$ is assigned $F$ and vice versa. This is due to the connections (i), (ii). Second, the vertex $out_j$ of \texttt{OR}-gadget $C_j$ is assigned $T$ due to (iv) and this, together with the design of \texttt{OR}-gadget, enforces that at least one of the literal vertices connected to $C_j$ by (iii) is assigned $T$.

The connection (iii) does not lead to unbounded degree of a literal vertex $v_i$/$\bar{v}_i$ if we reduce from {\sc $f$-Sparse 3-Sat}. Observe that unbounded degree may occur due to the connections (ii) and (iv). We can resolve this case by 'expanding' the triangle on $T,F,B$ into a triangulated ladder as long as necessary. Now the connections in (ii) and (iv) are modified so that (ii') every literal vertex $v_i$/$\bar{v}_i$ is connected with distinct $N$ vertex in the triangulated ladder, and (iv') every vertex $out_j$ from \texttt{OR}-gadget $C_j$ is connected with distinct $N$ and $F$. Note that in the modified construction, the number of vertices created are still $O(n+m)$. The maximum degree is now bounded by $\max(f+2,5)$.

Suppose {\sc $f'$-Sparse 3-Coloring} can be solved in time subexponential in the number of vertices for every fixed $f'>0$. Then, given an instance of {\sc $f$-Sparse 3-Sat} for any fixed $f$, we run the presented reduction and construct {\sc $f'$-Sparse 3-Coloring} instance with $O(n+m)$ vertices. As the obtained instance can be solved in $2^{o(n+m)}$-time and $O(m)=O(n)$, we can solve the initial {\sc $f$-Sparse 3-Sat} instance in time $2^{o(n)}+poly(n)$-time, a contradiction to Theorem \ref{ETHs3s}. Hence the statement follows.
%transforms a {\sc 3-Sat} instance with $m$ clauses and $n$ variables appearing in at most $f$ clauses each (note that $m\le fn/3$) into a {\sc 3-Coloring} instance, a graph with maximum degree $\max(f+2,5)$ and $8n+9m=O(n)$ vertices. Thus a $2^{o(n)}$-time algorithm for {\sc  $\Delta$-Bounded 3-Coloring} (with $\Delta=f+2$) would imply a $2^{o(n)}$-time algorithm for {\sc $f$-Sparse 3-Sat}, and the theorem follows from Theorem~\ref{ETHs3s}.
% SEE p. 962, ex. 36-2 T. Cormen, C. Leiserson, and R. Rivest,
% ‘‘Introduction to Algorithms,’’ MIT Press, Cambridge, MA, 1990.
%
% Same reduction as in ``Garrey Johnson'' but using the fact that a
% vertex $v$ with degree $\deg(v)>4$ can be replaced by $C^2_{3\deg(v)}$
\end{proof}

Let us consider an instance $G$ of {\sc $n\times n$ Clique} on the vertex set $V=[n]\times [n]$. Let $x$ be a
vertex of $G$ and $X$ is a subset of $V$. We denote the number of
edges between $x$ and $X$ by $\deg(x,X)$. The set of vertices in the
$i$-th row is denoted as $\mathcal{R}_i=\{(i,j):j\in [n]\}$.  Then
given two vertex sets $X$ and $Y$, $E(X,Y)$ is the set of edges with
one end in $X$ and one end in $Y$, and $\deg(X,Y)=|E(X,Y)|$. These
definition naturally extend to the case where $X$ or $Y$ is a single
vertex.

\begin{quote}
{\sc $D$-Degree Constrained $n\times n$ Clique} ({\sc $D$-DCnnC})\\
{\bf Input:} An instance $G$ of {\sc $n\times n$ Clique} with two additional conditions.
\begin{itemize}
\item[(A)] For every pair of rows $i,k \in [n]$, we have $\deg((i,j),\mathcal{R}_k) = \Delta^{ik}$ for every $j \in [n]$, for some constant $\Delta^{ik}$.
\item[(B)] For every vertex $(i,j)$ with $i\in [n], j\in [n-1]$, there exists a set of rows $\exists I_{i,j}\subsetneq [n]$ with $|I_{i,j}|\ge n-D$ such that $N(i,j)\cap\mathcal{R}_k = N(i,j+1) \cap \mathcal{R}_k$ for every $k\in I_{i,j}$.
\end{itemize}
{\bf Goal:} Determine if there is an $n$-clique in $G$ with exactly one element from each row.
\end{quote}

\begin{lemma}\label{ETH-C}
Assuming ETH, {\sc $D$-DCnnC} cannot be solved in $2^{o(n\log{n})}$-time for every fixed $D>0$.
\end{lemma}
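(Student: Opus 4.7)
The plan is to reduce {\sc $f'$-Sparse 3-Coloring} to {\sc $D$-DCnnC}, following the $n\times n$ Clique scheme of Lokshtanov et al.~\cite{LMS11} but imposing additional structure so as to enforce conditions (A) and (B). Fixing $D$ and setting $f':=D$, the goal is to map an instance $H$ on $N$ vertices with $\Delta(H)\le f'$ to an instance of {\sc $D$-DCnnC} whose parameter satisfies $n=\Theta(N/\log N)$; then a purported $2^{o(n\log n)}$-time algorithm for {\sc $D$-DCnnC} would yield a $2^{o(N)}$-time algorithm for {\sc $f'$-Sparse 3-Coloring}, contradicting Lemma~\ref{ETHb3c}.

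The first step is to build a \emph{distance-2 coloring} of $H$, namely a partition $V(H)=S_1\sqcup\dots\sqcup S_t$ into independent sets such that any two vertices inside the same $S_a$ have disjoint neighborhoods. Since the square $H^2$ has maximum degree at most $f'^2$, applying Brooks' theorem to each component of $H^2$ produces such a partition with $t\le f'^2+1$, a constant depending only on $D$. I would then pick $g$ so that $n:=3^g$ satisfies $ng\ge N$, pad $H$ with isolated vertices so that $|V(H)|=ng$ and each $|S_a|$ is a multiple of $g$, and cut each $S_a$ into groups of size exactly $g$, producing $n$ groups $G_1,\dots,G_n$.

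The $n\times n$ Clique instance $G^\star$ is defined as follows: the $n=3^g$ vertices $(i,1),\dots,(i,n)$ in row $i$ correspond to the $3^g$ functions $c_i^j\colon G_i\to\{1,2,3\}$, listed in the order of a ternary Gray code so that $c_i^j$ and $c_i^{j+1}$ differ in the color of exactly one vertex $v\in G_i$. I put an edge between $(i,j)$ and $(k,\ell)$, $i\ne k$, whenever $c_i^j\cup c_k^\ell$ is a proper coloring of $H[G_i\cup G_k]$. Because each $G_i$ lies in a single class $S_a$, every function on $G_i$ is already a proper coloring, and for any two groups $G_i\subseteq S_a$, $G_k\subseteq S_b$ with $a\ne b$ the distance-2 property forces $H[G_i,G_k]$ to be a matching of some size $m_{ik}$. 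Hence, for any $c_i^j$, the number of extensions $c_k^\ell$ equals $3^{g-m_{ik}}\cdot 2^{m_{ik}}$, independently of $j$ — exactly condition (A), with $\Delta^{ik}=3^{g-m_{ik}}\cdot 2^{m_{ik}}$ (and $\Delta^{ik}=n$ when $a=b$, since then there are no edges between $G_i$ and $G_k$). The clique correspondence is standard: an $n$-clique with one vertex per row reassembles the chosen local colorings into a proper 3-coloring of $H$, and vice versa.

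Condition (B) then follows essentially for free from the Gray-code column ordering: moving from column $j$ to $j+1$ in row $i$ changes the color of a single vertex $v\in G_i$, so $N(i,j)\cap\mathcal{R}_k$ can differ from $N(i,j+1)\cap\mathcal{R}_k$ only when $G_k$ contains a neighbor of $v$ in $H$. Since $v$ has at most $f'=D$ neighbors and each lies in one group, at most $D$ rows are affected; taking $I_{i,j}$ to be the complement of these rows (dropping $i$ as well if needed to ensure $I_{i,j}\subsetneq[n]$) gives $|I_{i,j}|\ge n-D$. I expect the main technical obstacle to be condition (A): one has to arrange that the bipartite graph between \emph{every} pair of groups is a matching on both sides, and this is exactly what the Brooks'-theorem-based distance-2 coloring of $H^2$ buys. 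Once this is in place, $N=ng=\Theta(n\log n)$ and the claimed lower bound follows from Lemma~\ref{ETHb3c}.
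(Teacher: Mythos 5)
Your proposal is correct and matches the paper's proof essentially step for step: reduce from {\sc $f'$-Sparse 3-Coloring} with $f'=D$, use Brooks' theorem on $H^2$ to obtain a distance-$2$ partition into at most $f'^2+1$ classes, cut the classes into groups of size $\Theta(\log n)$ padded with isolated vertices, enumerate each group's $3$-colorings in ternary Gray-code order to get the rows/columns of the clique instance, and verify (A) via the matching structure between groups and (B) via the single-vertex change between consecutive Gray-code colorings. The only cosmetic difference is in the padding bookkeeping (the paper fixes the smallest $x$ with $(f'^2+1)+\lfloor(n-f'^2-1)/x\rfloor\le 3^x$, pads with empty sets, then fills to size $x$), which your version should make precise by choosing $g$ large enough that $ng\ge N+(D^2+1)(g-1)$, but this does not change the argument.
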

\begin{proof}
Similarly to the proof of Theorem~\ref{ETHclique} in~\cite{LMS11}, the
theorem follows from a reduction of {\sc $f'$-Sparse 3-Coloring}
to {\sc $D$-DCnnC}. This reduction is such that an instance $G$ of
{\sc $f'$-Sparse 3-Coloring} with $n$ vertices is equivalent to
an instance $G'$ of {\sc $D$-DCnnC} with $D=f'$ and vertex set
$V'=[n']\times [n']$, with $n'\log(n')=O(n)$.

Let $x$ be the smallest integer such that
$(f'^2+1)+\floor{\frac{n-f'^2-1}{x}} \le 3^x$.  Now let
$n'=3^x$.  Note that since $f'$ is a constant we have $n < x3^x <
4n$ (i.e. $n < n'\log(n') < 4n$) for $n$ sufficiently large, thus
$x3^x = O(n)$ (i.e. $n'\log(n')=O(n)$).

Given $G$, partition its vertices into $f'^2+1$ parts,
$V_0,\ldots,V_{f'^2}$ such that any two vertices in the same part
are at distance at least $3$ in $G$. This is possible by Brooks'
Theorem and the fact that the maximum degree of $G^2$ is upper bounded by $f'^2$.  Then partition
each $V_i$ into $\ceil{\frac{|V_i|}{x}}$ subsets $V_i^j$, such that
$|V_i^1|\le x$ and such that all the remaining sets $V_i^j$ have size
exactly $x$. Here all the subsets have size $x$ except at most
$f'^2+1$ subsets of size at least 1 (and at most $x-1$). Thus now
$V$ is partitioned in at most $(f'^2+1) +
\floor{\frac{n-f'^2-1}{x}}$ (which is $\le n'$) subsets of size at
most $x$. Rename those subsets $X_1,X_2,\ldots$ and if necessary add
some empty sets at the end in order to have exactly $n'$ sets
$X_i$. Then add $xn'-n$ isolated vertices in $G$ and dispatch them in
the $X_i$'s in such a way that all these sets have size exactly $x$.

Now let $G'$ be the graph on $[n']\times [n']$ such that each vertex
$(i,j)$ corresponds to a 3-coloring of $X_i$. There is enough space in
each row since the stable set $X_i$ has exactly $3^{|X_i|} = 3^x = n'$
3-colorings. We enumerate these colorings in such a way that for every
$i\in [n']$ and $j\in [n'-1]$ the coloring corresponding to $(i,j)$
differs on exactly one vertex with the coloring corresponding to
$(i,j+1)$. This can be done using a ternary Gray code on $|X_i|=x$
digits, which k$^{th}$ digit corresponds to the color of the k$^{th}$
vertex in $X_i$. In $G'$ there is an edge between the vertices $(i,j)$
and $(i',j')$ if and only if the corresponding 3-colorings of $X_i$
and $X_{i'}$ are compatible, this when $i\neq i'$ and when there is no
adjacent vertices $u\in X_i$ and $v\in X_{i'}$ with the same color.

It is clear that $G$ is 3-colorable if and only if $G'$ has an
$n'$-clique with one element per row. So it remains now to check that
$G'$ verifies conditions (A) and (B). By the construction of the sets
$X_i$ (sub-partitioning the sets $V_j$), for any $i, i'\in [n]$ the
induced graph $G[X_i\cup X_{i'}]$ is a matching (generally a
non-perfect one). If there are $m$ edges in this matching one easily
sees that any 3-coloring of $X_i$ is compatible with exactly
$2^m3^{x-m}$ 3-colorings of $X_{i'}$, and thus (A) holds.
Furthermore, since two consecutive 3-colorings of $X_i$ (say the ones
corresponding to $(i,j)$ and $(i,j+1)$) differ on exactly one vertex,
say $u$, of degree at most $f'$ in $G$ there are at least
$n'-f' = n'-D$ sets $X_{k}$ without any neighbor of $u$
(eventually $X_k=X_i$).  Since $u$ is isolated in $G[X_i,X_k]$ its
color does not really matters and we clearly have that
$N(i,j)\cap \mathcal{R}_k = N(i,j+1)\cap \mathcal{R}_k$ (if $X_k=X_i$,
then $\forall j,\ N(i,j)\cap \mathcal{R}_k=\emptyset$), and thus that
(B) holds. This concludes the proof of the theorem.
\end{proof}

\begin{quote}
{\sc $D$-Degree Constrained $2n\times 2n$ Biclique} ({\sc $D$-DCnnB})\\
{\bf Input:} A graph $G$ with vertex set $V(G) = [2n]\times [2n]$ with three additional conditions.
\begin{itemize}
\item[(A)] For every edge $(i,j)(i',j')\in E(H)$ we have $1\le i\le n <i'\le 2n$ and $1\le j\le n <j'\le 2n$. Furthermore,   $(i,j)(n+i',n+j')\in E(H)$ if and only if $(i',j')(n+i,n+j)\in E(H)$.
\item[(B)] For every pair of rows $i \in [n]$ and $k\in [n+1,2n]$, we have $\deg((i,j),\mathcal{R}_k) = \Delta^{ik}$ for every $j \in [n]$, for some constant $\Delta^{ik}$.
\item[(C)] For every vertex $(i,j)$ with $i\in [n], j\in [n-1]$, there exists a set of rows $\exists I_{i,j}\subsetneq [n+1,2n]$ with $|I_{i,j}|\ge n-D$ such that $N(i,j)\cap\mathcal{R}_k = N(i,j+1) \cap \mathcal{R}_k$ for every $k\in I_{i,j}$.
\end{itemize}
{\bf Goal:} Determine if there is $K_{n,n}$ with exactly one vertex per row.
\end{quote}

\begin{lemma}\label{ETHbiclique}
Assuming ETH, {\sc $D$-DCnnB} cannot be solved in $2^{o(n\log{n})}$-time for every fixed $D>0$.
\end{lemma}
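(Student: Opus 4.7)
The plan is to reduce $D$-DCnnC to $(D+1)$-DCnnB in polynomial time while preserving the row dimension $n$; together with Lemma~\ref{ETH-C}, this rules out any $2^{o(n\log{n})}$ algorithm for DCnnB. Given an instance $G$ of $D$-DCnnC on $[n]\times[n]$, I will construct $H$ on $[2n]\times[2n]$ by leaving the top-right and bottom-left quadrants isolated and putting into $E(H)$ two families of edges: (i) for every $(i,j)(i',j')\in E(G)$ (necessarily with $i\neq i'$), the pair $(i,j)(n+i',n+j')$ together with its mirror $(i',j')(n+i,n+j)$; and (ii) for every vertex $(i,j)$ of $G$, the diagonal edge $(i,j)(n+i,n+j)$. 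Condition~(A) of DCnnB is then immediate: all edges go from the top-left to the bottom-right quadrant, and the required symmetry follows from the symmetry of $G$ for family~(i) and from the definition for family~(ii).

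The key step is the equivalence between $n$-cliques of $G$ with one vertex per row and $K_{n,n}$'s of $H$ with one vertex per row. Given an $n$-clique $\{(i,\alpha_i):i\in[n]\}$ in $G$, the sets $X=\{(i,\alpha_i):i\in[n]\}$ and $Y=\{(n+i,n+\alpha_i):i\in[n]\}$ form a $K_{n,n}$ in $H$: cross-edges with $i\neq i'$ come from family~(i), while those with $i=i'$ come from family~(ii). Conversely, in any such $K_{n,n}$ with top choices $\alpha_i$ and bottom choices $\beta_i$, the edge joining $(i,\alpha_i)$ and $(n+i,n+\beta_i)$ cannot come from family~(i) (since $G$ has no intra-row edges), so it must be a diagonal, forcing $\alpha_i=\beta_i$; the remaining cross-edges then encode, via family~(i), the adjacencies of an $n$-clique $\{(i,\alpha_i)\}$ in $G$.

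The main obstacle, and the reason for the additive loss $+1$ in the parameter $D$, is verifying condition~(C). For $(i,j)$ with $i\in[n]$ and $j\in[n-1]$, I will take the set $I'_{i,j}\subsetneq[n]$ provided by condition~(B) of DCnnC and define $I_{i,j}=\{n+k':k'\in I'_{i,j}\setminus\{i\}\}$. For $k'\neq i$ only family-(i) edges can reach row $n+k'$, so the neighborhoods of $(i,j)$ and $(i,j+1)$ in that row agree; the index $i$ itself must however be excised, because the diagonal edges in family~(ii) differ between $(i,j)$ and $(i,j+1)$ in row $n+i$. This loses at most one element, yielding $|I_{i,j}|\ge n-(D+1)$, so the constructed instance lies in $(D+1)$-DCnnB. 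Verifying condition~(B) is then routine: for $i\neq i''$ the degree of $(i,j)$ into row $n+i''$ equals the DCnnC parameter $\Delta^{ii''}$, and for $i=i''$ it equals $1$ from the unique diagonal edge, both independent of $j$.
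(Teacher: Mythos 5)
Your construction of $H$ (the graph of $G$ plus the diagonal edges $(i,j)(n+i,n+j)$) and the biclique equivalence argument are exactly those of the paper, so the approach is the same. You are in fact more careful than the paper: the paper simply asserts the result with parameter $D$ and never verifies conditions (B) and (C), whereas you correctly observe that the diagonal edges spoil condition (C) in row $n+i$, forcing a shift to $(D+1)$-DCnnB; as you note, this is harmless since the lemma is quantified over all fixed $D$ and the downstream reduction to Arity~4 Permutation CSP only needs it for arbitrarily large $D$.
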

\begin{proof}
The theorem follows from a simple reduction of {\sc $D$-DCnnC} to {\sc
  $D$-DCnnB}. This reduction is such that an instance $G$ of {\sc
  $D$-DCnnC} with vertex set $[n]\times [n]$ is equivalent to an
instance $H$ of {\sc $D$-DCnnB} with vertex set $[2n]\times [2n]$.
The graph $H$ is such that $(i,j)(n+i',n+j') \in E(H)$ if and only if
$(i,j)(i',j')\in E(G)$ or if $i=i'$ and $j=j'$. It is easy to see that
if $H$ contains a $K_{n,n}$ with one vertex per row, the selected
vertices in the first (resp. last) $n$ rows form a stable
set. Furthermore for any $i\in [n]$, according to the adjacencies
between $\mathcal{R}_i$ and $\mathcal{R}_{n+i}$, a vertex $(i,j)$ is
selected if and only if $(n+i,n+j)$ is also selected. Then it is
simple to conclude that $G$ has a clique with one vertex per row if
and only if $H$ has a $K_{n,n}$ with one vertex per row. Thus the
theorem directly follows from Lemma~\ref{ETH-C}.
\end{proof}

We can finally prove the main result.

\begin{proofof}{Theorem~\ref{th-main}}
To prove the theorem we exhibit a reduction from {\sc $D$-DCnnB} to
{\sc Arity 4 Permutation CSP}. This reduction associates each instance
$H$ (a graph on $[2n]\times [2n]$) of {\sc $D$-DCnnB} to an instance
of $(V,\mathcal{C})$ of {\sc Arity 4 Permutation CSP} with
$|V|=(2D+4)n+1$, in such a way that $H$ is a positive instance if and
only if the optimal solution to $(V,\mathcal{C})$ satisfies
${2Dn}\choose{2}$${2n+1}\choose{2}$$+(n+2)\sum_{i\in [n],i'\in
  [n+1,2n]}\Delta^{i,i'} + n^2$ constraints. By Lemma~\ref{ETHbiclique}, such a reduction clearly implies the
theorem.

As in the proof of Theorem~\ref{th-arity6}, $V$ has 3 types of elements,
 $2n$ elements $r_i$, with
$i\in[2n]$, corresponding to the rows of $H$, $2n+1$ elements $c_i$,
with $i\in[2n+1]$, corresponding to the columns of $H$ (except
$c_{n+1}$ that does not exactly correspond to a column), and $2Dn$
dummy elements $d_i$, with $i\in[2Dn]$.

The constraint set $\mathcal{C}$ is the union of several types of
constraints, the structural constraints $\mathcal{C}_S$ that force the
shape of optimal orderings, and the constraints depending on
$H$, $\mathcal{C}_H = \mathcal{C}_H^{crcr} \cup \mathcal{C}_H^{crrc}
\cup \mathcal{C}_H^{rcrc} \cup \mathcal{C}_H^{rccr}$.

Formally:
\begin{itemize}
\item $\mathcal{C}_S = \{(d_a,d_b,c_j,c_{j'})\ |\ \forall a<b
  \text{ and } \forall j<j' \}$
\item $\mathcal{C}_H^{crcr} = \{(c_j,r_i,c_{n+j'},r_{n+i'})\ |
  \ \forall \ (i,j)(n+i',n+j')\in E(H)\}$.
\item $\mathcal{C}_H^{crrc} = \{(c_j,r_i,r_{n+i'},c_{n+j'+1})\ |
  \ \forall \ (i,j)(n+i',n+j')\in E(H)\}$.
\item $\mathcal{C}_H^{rcrc} = \{(r_i,c_{j+1},r_{n+i'},c_{n+j'+1})\ |
  \ \forall \ (i,j)(n+i',n+j')\in E(H)\}$.
\item $\mathcal{C}_H^{rccr} = \{(r_i,c_{j+1},c_{n+j'},r_{n+i'})\ |
  \ \forall \ (i,j)(n+i',n+j')\in E(H)\}$.
\end{itemize}

There are some inconsistencies in this last case, we cannot have the
same element appearing several times in some constraint. This is the
case when $j=n$ and $j'=1$ (since $c_{j+1} = c_{n+j'}$), and in such a
case we replace the constraint by $(d_1,r_i,c_{n+1},r_{n+i'})$.

\begin{claim}\label{l-<n<}
There is an optimal ordering of the form $d_1d_2\ldots d_{2Dn}V_{\le
  n}V_{>n}$, where $V_{\le n}$ (resp. $V_{>n}$) is a sequence of
$r_i$'s and $c_j$' with $i$ and $j\le n$ (resp. $i$ and $j> n$).
\end{claim}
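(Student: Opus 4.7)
The plan is a rearrangement argument. Starting from any optimal ordering $\pi$, I construct an ordering $\pi^\star$ of the claimed form $d_1 d_2 \cdots d_{2Dn} V_{\le n} V_{>n}$ and check that $\pi^\star$ satisfies at least as many constraints as $\pi$; since $\pi$ is optimal this forces $\pi^\star$ to be optimal as well. The construction of $\pi^\star$ is explicit: first place $d_1, d_2, \ldots, d_{2Dn}$ at the front in this order; then append the elements of $V_{\le n}$ in the same relative order they had in $\pi$; finally append the elements of $V_{>n}$ in the same relative order they had in $\pi$. Two features of $\pi^\star$ will do all the work in the verification: the relative order within each of $V_{\le n}$ and $V_{>n}$ is inherited from $\pi$, and every element of $V_{\le n}$ precedes every element of $V_{>n}$.

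The bulk of the proof is a constraint-by-constraint check. For $\mathcal{C}_S$, a constraint $(d_a,d_b,c_j,c_{j'})$ with $a<b$, $j<j'$ is satisfied in $\pi^\star$ iff $\pi^\star(c_j)<\pi^\star(c_{j'})$, because the $d$'s stand at the front in the correct order and precede all $c$'s; this pairwise inequality holds whenever it held in $\pi$ and is automatic in the mixed case $j\le n<j'$, so $\pi^\star$ can only gain constraints from $\mathcal{C}_S$. For $\mathcal{C}_H$, each of the families $\mathcal{C}_H^{crcr}$, $\mathcal{C}_H^{crrc}$, $\mathcal{C}_H^{rcrc}$, $\mathcal{C}_H^{rccr}$, together with the special replaced constraint $(d_1, r_i, c_{n+1}, r_{n+i'})$, has the structural property that its leading element(s) belong to $V_{\le n}$ and its trailing element(s) belong to $V_{>n}$. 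Under this split, satisfaction in $\pi^\star$ reduces to pairwise comparisons within $V_{\le n}$ for the leading block and within $V_{>n}$ for the trailing block, both of which agree with $\pi$ on the relevant pairs; hence satisfaction in $\pi$ implies satisfaction in $\pi^\star$.

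The main point needing care is the boundary behaviour of $c_{n+1}$, which lies in $V_{>n}$. Consequently the constraints in $\mathcal{C}_H^{rcrc}$ and $\mathcal{C}_H^{rccr}$ that arise with $j=n$ have only $r_i$ in $V_{\le n}$, with the remaining three entries all in $V_{>n}$; the special constraint $(d_1, r_i, c_{n+1}, r_{n+i'})$ is of the same shape, with $d_1$ sitting at the very front of $\pi^\star$ by construction. In each of these cases the required internal order within $V_{>n}$ is again inherited from $\pi$, so a satisfied constraint of $\pi$ remains satisfied by $\pi^\star$. I expect that verifying these boundary situations and the special $d_1$-constraint will be the main (though still routine) obstacle; everything else is immediate from the two structural features of $\pi^\star$ noted above, which together yield the desired optimal ordering.
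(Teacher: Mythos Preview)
Your proof is correct and follows essentially the same rearrangement idea as the paper: the paper moves the $d_i$'s to the front one by one and then bubble-sorts $V_{\le n}$ ahead of $V_{>n}$ via adjacent swaps (noting that no constraint has a $V_{>n}$ element immediately before a $V_{\le n}$ element), while you perform the same rearrangement globally and verify constraint-by-constraint. Your explicit handling of the $c_{n+1}$ boundary and the special $(d_1,r_i,c_{n+1},r_{n+i'})$ constraint is accurate and matches what the paper's terse swap argument implicitly uses.
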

\begin{proof}
Consider any optimal ordering. Moving successively the elements $d_i$
(from $d_1$ to $d_{2Dn}$) to the i$^{th}$ position, preserves the
allready satisfied constraints, so after those moves the order remains
optimal. Then if there are two consecutive elements $v_j$ and $v_i$
(in this order), where $v_i$ is an element $r_i$ or $c_i$ with $i\le
n$, and where $v_j$ is an element $r_j$ or $c_j$ with $j> n$, since
there is no constraint involving $v_j$ and $v_{i}$ in this order,
switching those elements preserves the allready satisfied
constraints. Thus there is an optimal ordering of the desired form.
\end{proof}

\begin{claim}\label{l-dc}
There are optimal orderings of the form $d_1d_2\ldots
d_{2Dn}R_0c_1R_1c_2R_2\ldots c_{2n}R_{2n}c_{2n+1}R_{2n+1}$, where
each $R_i$ with $i\le n$ (resp. $i>n$) is a possibly empty sequence of
$r_j$ with $j\le n$ (resp. $j>n$).
\end{claim}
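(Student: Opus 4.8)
\emph{Approach.} Start from an optimal ordering $O$ of the shape guaranteed by Claim~\ref{l-<n<}, namely $d_1\cdots d_{2Dn}\,V_{\le n}\,V_{>n}$, and transform it by local moves, each of which never decreases the number of satisfied constraints; since $O$ is optimal, each such move is in fact neutral, so we end with an optimal ordering. It suffices to show that (after such moves) inside $V_{\le n}$ the elements $c_1,\dots,c_n$ occur in increasing order, inside $V_{>n}$ the elements $c_{n+1},\dots,c_{2n+1}$ occur in increasing order, and $c_{n+1}$ is the first element of $V_{>n}$. Indeed, once these hold, each $r_i$ with $i\le n$ lies in $V_{\le n}$ (by Claim~\ref{l-<n<}) and hence in one of the slots $R_0,\dots,R_n$, and each $r_{n+i'}$ lies in $V_{>n}$ after $c_{n+1}$, hence in one of $R_{n+1},\dots,R_{2n+1}$; that is exactly the announced form.

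\emph{Key estimate.} The crux is a bound on the effect of transposing two adjacent elements $r_i$ ($i\le n$) and $c_a$ ($a\le n$): it changes the number of satisfied constraints by at most $2Dn$ in absolute value. Since transposing adjacent elements only affects constraints containing both of them, and $\mathcal{C}_S$ contains no $r$-elements, one only inspects $\mathcal{C}_H$; a direct check of the four families $\mathcal{C}_H^{crcr},\mathcal{C}_H^{crrc},\mathcal{C}_H^{rcrc},\mathcal{C}_H^{rccr}$ shows that the constraints containing both $r_i$ and $c_a$ are exactly those attached to edges incident to $(i,a)$ (which prefer $c_a$ before $r_i$) together with those attached to edges incident to $(i,a-1)$ (which prefer $r_i$ before $c_a$), and that the resulting change in the count equals $\sum_{e\ni(i,a)}g(e)-\sum_{e\ni(i,a-1)}g(e)$, where $g(e)\in\{0,1,2\}$ depends only on the (unchanged) restriction of $O$ to $V_{>n}$. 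By condition (C) the neighbourhoods $N(i,a)$ and $N(i,a-1)$ coincide on all rows outside a set of at most $D$ rows, and on each row they have the common size $\Delta^{ik}\le n$ by condition (B); hence the two sums differ by at most $2Dn$. The same bound holds for transposing $r_{n+i'}$ with an adjacent right $c$, using the mirror versions of (B) and (C) on the lower-right block, which follow from the symmetry built into condition (A). Finally, transposing $c_{n+1}$ with an $r_{n+i'}$ immediately to its left can only increase the count, since every constraint containing both lists $c_{n+1}$ before $r_{n+i'}$, so such a constraint turns from violated into possibly satisfied.

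\emph{Carrying it out.} Suppose $c_1,\dots,c_n$ are not sorted in $V_{\le n}$, and pick $c_p,c_q$ that are consecutive among the $c$-elements of $V_{\le n}$ with $c_p$ before $c_q$ and $q<p$; only left $r$-elements, say $t\le n$ of them, lie between $c_p$ and $c_q$. Move $c_q$ leftwards past these $t$ elements (total cost at most $2Dn\cdot t\le 2Dn^2$ by the key estimate) and then past $c_p$; the last step gains all $\binom{2Dn}{2}$ constraints $(d_a,d_b,c_q,c_p)$ of $\mathcal{C}_S$ and loses nothing, because no constraint of $\mathcal{C}_H$ contains two $c$-elements of index $\le n$. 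Since $\binom{2Dn}{2}-2Dn^2>0$ for $D\ge 2$, the net change is strictly positive, contradicting optimality; hence $c_1,\dots,c_n$ are already sorted in $V_{\le n}$. The identical argument on $V_{>n}$ shows $c_{n+1},\dots,c_{2n+1}$ are sorted there. With the right $c$-elements sorted, $c_{n+1}$ is the smallest of them, so the only elements of $V_{>n}$ preceding $c_{n+1}$ are right $r$-elements; moving $c_{n+1}$ to the front of $V_{>n}$ past them is non-decreasing by the last remark above, and since $O$ is optimal it is neutral, so it produces an optimal ordering of the announced form. (This needs $D\ge2$, which is no loss since in the chain of reductions $D=f'\ge5$.)

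\emph{Main obstacle.} The only non-routine step is the key estimate: enumerating, for a fixed pair $(r_i,c_a)$, exactly which $\mathcal{C}_H$-constraints contain both and in which direction each prefers them---with due care for the boundary columns $c_1,c_{n+1},c_{2n+1}$ and for the degenerate constraint $(d_1,r_i,c_{n+1},r_{n+i'})$---and then transferring conditions (B) and (C) to the lower-right block through the symmetry of condition (A). Everything else is the exchange argument already used for Claim~\ref{l-<n<}.
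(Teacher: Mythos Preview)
Your proof is correct and follows essentially the same route as the paper: bound the cost of swapping an $r$-element with an adjacent $c$-element by $2Dn$ via conditions (B) and (C), bound the gain of swapping two out-of-order $c$-elements by $\binom{2Dn}{2}$, and combine these into an exchange argument that sorts the $c$'s within each half. The only cosmetic differences are that the paper moves the larger-indexed $c$ rightward (you move the smaller one leftward) and phrases the exchange as ``still optimal after the move'' rather than as a contradiction; your extra paragraph pushing $c_{n+1}$ to the front of $V_{>n}$ makes explicit a point the paper leaves implicit.
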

\begin{proof}
By Claim~\ref{l-<n<} we can consider an optimal ordering the form
$DV_{\le n}V_{>n}$. So it remains to prove that in $V_{\le n}$
(resp. $V_{>n}$) the $c_j$'s can be reordered increasingly. If $V_{\le
  n}$ has two elements $c_{j_2}$ and $c_{j_1}$ in this order, with
$1\le j_1<j_2\le n$, separated by a (possibly empty) sequence $R$ of
$r_i$'s, the following two claims show that moving $c_{j_2}$ right after $c_{j_1}$ does not decrease the number of satisfied constraints.
Actually, switching $c_{j_2}$ with the $r_i$'s in $R$ decreases the
number of satisfied constraints, but this is compensated by the last
switch between $c_{j_2}$ and $c_{j_1}$.

\begin{claim}\label{cl-cr}
Given any ordering with two consecutive elements $c_{j_2}$ and $r_i$
(in this order), with $j_2\in [2,n]$ and $i\in [n]$, switching them
decreases the number of satisfied constraints by at most $2Dn$.
\end{claim}
\begin{proof}
Let us call the constraints newly satisfied after the switch {\it activated constraints}, and the newly unsatisfied constraints {\it inactivated
  constraints}.  Note that new and inactivated constraints contain $c_{j_2}$
and $r_i$ as consecutive elements (but with opposite orders).  The
inactivated constraints are of the form $(c_{j_2},r_i,c_{j'},r_{i'})\in
\mathcal{C}_H^{crcr}$, for the values $i'$ and $j'$ such that
$(i,j)(i',j')\in E(H)$, or of the form $(c_{j_2},r_i,r_{i'},c_{j'})\in
\mathcal{C}_H^{crrc}$, for the values $i'$ and $j'$ such that
$(i,j)(i',j'-1)\in E(H)$. Similarly the new constraints are of the
form $(r_i,c_{j_2},c_{j'},r_{i'})\in \mathcal{C}_H^{rccr}$, for the
values $i'$ and $j'$ such that $(i,j-1)(i',j')\in E(H)$, or of the
form $(r_i,c_{j_2},r_{i'},c_{j'})\in \mathcal{C}_H^{rcrc}$, for the
values $i'$ and $j'$ such that $(i,j-1)(i',j'-1)\in E(H)$.

Condition (C) implies that the neighbors $N(i,j_2)$ and $N(i,j_2-1)$ are similar in some sense. To be precise, consider a vertex $(i',j')\in V(H)$ with $i',j'\in [n+1,2n]$. If the vertex is picked from row $i' \in I_{i,j_2-1}$ as defined in condition (C), then we have $(i,j_2)(i',j')\in E(H)$ if and only if $(i,j_2-1)(i',j')\in E(H)$. Hence it follows from the construction of $\mathcal{C}_H$ that $$(c_{j_2},r_i,c_{j'},r_{i'}) \in \mathcal{C}_H^{crcr} \text{ if and only if }(r_i,c_{j_2},c_{j'},r_{i'}) \in \mathcal{C}_H^{rccr}$$ and $$(c_{j_2},r_i,r_{i'},c_{j'+1})\in \mathcal{C}_H^{crrc} \text{ if and only if } (r_i,c_{j_2},r_{i'},c_{j'+1})\in \mathcal{C}_H^{rcrc}$$

Therefore, whenever the constraint $(c_{j_2},r_i,c_{j'},r_{i'})\in \mathcal{C}_H^{crcr}$
(resp. $(c_{j_2},r_i,r_{i'},c_{j'+1})\in \mathcal{C}_H^{crrc}$) becomes inactivated after the switch, the
constraint $(r_i,c_{j_2},c_{j'},r_{i'})\in \mathcal{C}_H^{rccr}$
(resp. $(r_i,c_{j_2},r_{i'},c_{j'+1})\in \mathcal{C}_H^{rcrc}$) becomes activated.

If the vertex $(i',j')\in V(H)$ is picked from row $i' \notin I_{i,j_2-1}$, then the inactivated constraints may not be compensated by activated constraints as in the case $i' \in I_{i,j_2-1}$. Hence the cost of the switch is bounded by $\sum_{i'\notin I_{i,j_2-1}} 2|N_H(i,j_2)\cap \mathcal{R}_{i'}|\leq 2Dn$.
\end{proof}

\begin{claim}\label{cl-cc}
Given an ordering starting by $d_1,d_2,\ldots,d_{2Dn}$, if there are
two consecutive $c_{j_2}$ and $c_{j_1}$ (in this order) with $j_1 <
j_2$, switching them increases the number of satisfied constraints by
at least ${2Dn}\choose{2}$.
\end{claim}
\begin{proof}
Indeed, there is no constraint involving $c_{j_2}$ and $c_{j_1}$ in
this order. Thus switching them only adds new satisfied constraints,
including the ${2Dn}\choose{2}$ constraints
$(d_a,d_b,c_{j_1},c_{j_2}) \in \mathcal{C}_S$.
\end{proof}

Thus moving $c_{j_2}$ along $R$ (which has length at most $n$) costs
at most $n(2Dn)$ which is compensated by the benefits from the last
switch between $c_{j_2}$ and $c_{j_1}$ (${2Dn}\choose{2}$$\ge
2Dn^2$). Similar arguments hold for $V_{>n}$, and this concludes the
proof of the lemma.
\end{proof}

A {\it convenient} ordering is an orderings of the form $d_1d_2\ldots
d_{2Dn}c_1R_1c_2R_2\ldots c_{2n}R_{2n}c_{2n+1}$, where each $R_i$ with
$i\le n$ (resp. $i>n$) is a possibly empty sequence of $r_j$ with
$j\le n$ (resp. $j>n$).  Given a convenient ordering, one can define a
function $\phi :[2n]\longrightarrow [2n]$ such that $\forall i\in
[2n]$, $r_i\in R_{\phi(i)}$. Given such a function one can define a
set of vertices with exactly one vertex per row, $V_\phi =
\{(i,\phi(i))\ |\ i\in [2n]\}$.

\begin{claim}\label{l-sequ}
There are optimal orderings that are convenient.
\end{claim}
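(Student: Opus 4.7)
The plan is to take the optimal ordering guaranteed by Claim~\ref{l-dc}, namely $d_1d_2\ldots d_{2Dn}R_0c_1R_1c_2R_2\ldots c_{2n}R_{2n}c_{2n+1}R_{2n+1}$, and to argue that $R_0$ and $R_{2n+1}$ can be emptied without decreasing the number of satisfied constraints. Since a convenient ordering is exactly such a sequence with $R_0$ and $R_{2n+1}$ empty, this will suffice.

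The argument will rest on three structural observations extracted by inspection of the definitions of $\mathcal{C}_H^{crcr}$, $\mathcal{C}_H^{crrc}$, $\mathcal{C}_H^{rcrc}$, $\mathcal{C}_H^{rccr}$ and $\mathcal{C}_S$: \textbf{(i)} no constraint requires $r_i$ with $i\le n$ to precede $c_1$; \textbf{(ii)} no constraint requires $c_{2n+1}$ to precede $r_j$ with $j>n$; \textbf{(iii)} no constraint involves two $r$-elements whose row indices are both $\le n$ or both $>n$. For (i), note that in $\mathcal{C}_H^{crcr}$ and $\mathcal{C}_H^{crrc}$ the column element preceding $r_i$ is $c_j$ with $j\in[1,n]$, while in $\mathcal{C}_H^{rcrc}$ and $\mathcal{C}_H^{rccr}$ the column element following $r_i$ is $c_{j+1}$ with $j\ge 1$, hence of index $\ge 2$, so $c_1$ never has to appear after $r_i$. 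Statement (ii) is symmetric: $c_{2n+1}$ can appear in $\mathcal{C}_H$ only as the fourth element of a $\mathcal{C}_H^{crrc}$ or $\mathcal{C}_H^{rcrc}$ constraint with $j'=n$, where it is preceded by $r_{n+i'}$. The exceptional constraint $(d_1,r_i,c_{n+1},r_{n+i'})$ introduced to avoid repeated elements involves neither $c_1$ nor $c_{2n+1}$, so it presents no obstacle; and (iii) is immediate because every $\mathcal{C}_H$-constraint contains exactly one row element of each kind.

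Given (i)--(iii), I would iteratively move each $r_i \in R_0$ to the position immediately after $c_1$. Crossing the other $r$-elements of $R_0$ (all of small index) changes no constraint's status by (iii), and the single crossing with $c_1$ can only \emph{activate} previously unsatisfied constraints by (i), since no satisfied constraint can be lost. A symmetric argument using (ii) and (iii) moves each $r_j \in R_{2n+1}$ to just before $c_{2n+1}$. The $\mathcal{C}_S$-constraints are untouched by these moves since they involve no $r$-elements. Iterating drains $R_0$ and $R_{2n+1}$ while preserving the shape from Claim~\ref{l-dc}, producing an optimal ordering that is convenient.

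The main obstacle is the careful constraint-by-constraint verification of (i) and (ii); once these are in hand, the swap arguments are a direct and purely mechanical adaptation of the style already used in Claims~\ref{cl-cr} and~\ref{cl-cc}, and no balancing of gains against losses is needed because (i)--(iii) guarantee that the moves are one-sided.
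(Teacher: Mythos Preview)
Your proof is correct and follows essentially the same approach as the paper's: both start from the ordering of Claim~\ref{l-dc} and use the key observation that no constraint places an $r_i$ before $c_1$ or after $c_{2n+1}$ to empty $R_0$ and $R_{2n+1}$ without loss. The paper phrases this more tersely---it simply swaps the $r_i$ immediately adjacent to $c_1$ (respectively $c_{2n+1}$) with that column element and iterates, so it never needs your observation~(iii); your version pushes each $r_i$ across the remaining $r$'s and then across the column element, which is why you need~(iii), but the substance is the same.
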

\begin{proof}
Consider an optimal ordering as described in Claim~\ref{l-dc}.  If
there was a $r_i$ just before $c_1$ (resp. just after $c_{2n+1}$),
switching those two elements would not contradict any satisfied
constraint: there is no constraint with a $r_i$ before $c_1$ or after
$c_{2n+1}$.  Thus there exists an optimal orderings of the desired
form.
\end{proof}

In the following let us denote $\sum \Delta^{i,i'}$ the following
sum $\sum_{1\le i\le n <i'\le 2n}\Delta^{i,i'}$.
\begin{claim}\label{l-nb}
The number of constraints satisfied by a convenient ordering is
${2Dn}\choose{2}$${2n+1}\choose{2}$$+(n+2)\sum\Delta^{i,i'} +
|E(H[V_\phi])|$.
\end{claim}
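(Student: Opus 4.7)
The plan is to split $\mathcal{C}$ into $\mathcal{C}_S$ and $\mathcal{C}_H$ and count each contribution separately. For $\mathcal{C}_S$, in a convenient ordering all $d$'s appear before all $c$'s and both groups are in increasing order, so every constraint $(d_a,d_b,c_j,c_{j'})$ with $a<b$ and $j<j'$ is satisfied, contributing exactly $\binom{2Dn}{2}\binom{2n+1}{2}$ satisfied constraints. The rest of the work is to show that the $\mathcal{C}_H$ count equals $(n+2)\sum\Delta^{i,i'}+|E(H[V_\phi])|$.

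For $\mathcal{C}_H$ I would argue edge by edge. Fix an edge $e=(i,j)(n+i',n+j')\in E(H)$ and consider its four associated constraints, one in each of $\mathcal{C}_H^{crcr},\mathcal{C}_H^{crrc},\mathcal{C}_H^{rcrc},\mathcal{C}_H^{rccr}$. Since $r_i$ lies in $R_{\phi(i)}$ and $r_{n+i'}$ in $R_{\phi(n+i')}$, the relative order of $r_i$ with $c_j, c_{j+1}$ and of $r_{n+i'}$ with $c_{n+j'}, c_{n+j'+1}$ is fully determined by whether $\phi(i)-j$ and $\phi(n+i')-(n+j')$ are negative, zero, or positive. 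The key intermediate step is a short case check on these $3\times 3$ sign patterns showing that the number of constraints among the four that are satisfied by the ordering equals
\[
1+[\phi(i)=j]+[\phi(n+i')=n+j']+[\phi(i)=j \text{ and } \phi(n+i')=n+j'].
\]
The degenerate case $j=n, j'=1$, where the $\mathcal{C}_H^{rccr}$ constraint has been replaced by $(d_1,r_i,c_{n+1},r_{n+i'})$, has to be handled explicitly: in a convenient ordering $d_1$ comes first, $\phi(i)\leq n$, and $\phi(n+i')\geq n+1$, so the replacement is automatically satisfied, which matches exactly the value predicted by the formula for this sign pattern.

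The final step is to sum the per-edge formula over $E(H)$ and interpret each piece. The constant $1$ contributes $|E(H)|$; the two middle indicators contribute the number of edges of $H$ with lower-left (resp.\ upper-right) endpoint in $V_\phi$; the last term is $|E(H[V_\phi])|$ by definition. Condition (B) gives $\deg((i,\phi(i)),V(H))=\sum_{k\in[n+1,2n]}\Delta^{ik}$ for $i\in[n]$, and the symmetry of $H$ imposed by (A) transfers the same identity to the upper-right rows, so both middle pieces equal $\sum\Delta^{i,i'}$. Summing the lower-left degree over $(i,j)\in[n]\times[n]$ also yields $|E(H)|=n\sum\Delta^{i,i'}$. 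Collecting the contributions gives $\binom{2Dn}{2}\binom{2n+1}{2}+(n+2)\sum\Delta^{i,i'}+|E(H[V_\phi])|$, as claimed. The main obstacle I expect is the per-edge case analysis, and in particular reconciling the $j=n, j'=1$ replacement with the universal formula; everything else is a clean accounting using the structural conditions (A) and (B) of {\sc $D$-DCnnB}.
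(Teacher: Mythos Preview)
Your proposal is correct. The per-edge count $1+[\phi(i)=j]+[\phi(n+i')=n+j']+[\phi(i)=j\text{ and }\phi(n+i')=n+j']$, which factors as $(1+[\phi(i)=j])(1+[\phi(n+i')=n+j'])$, is exactly right, and your treatment of the degenerate $j=n,\,j'=1$ replacement is accurate: in a convenient ordering both sign conditions $\phi(i)\le n$ and $\phi(n+i')\ge n+1$ are forced, so the replacement constraint is always satisfied, matching the formula.

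The paper takes a different organization of the same count. Instead of summing edge by edge, it fixes a row-pair $(i,i')$ with $i\le n<i'$ and counts, for each of the four constraint types, how many constraints involving $r_i$ and $r_{i'}$ are satisfied; e.g.\ for $\mathcal{C}_H^{crcr}$ this is $\deg(\mathcal{R}_i^{\le\phi(i)},\mathcal{R}_{i'}^{\le\phi(i')})$, and similarly for the other three types with the obvious $\le/\ge$ combinations. Adding the four quadrant counts and using that $(i,\phi(i))$ and $(i',\phi(i'))$ lie in both their $\le$- and $\ge$-halves gives $\deg(\mathcal{R}_i,\mathcal{R}_{i'})+\deg((i,\phi(i)),\mathcal{R}_{i'})+\deg(\mathcal{R}_i,(i',\phi(i')))+\deg((i,\phi(i)),(i',\phi(i')))$, which by (B) equals $(n+2)\Delta^{i,i'}+\deg((i,\phi(i)),(i',\phi(i')))$. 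Your edge-by-edge indicator argument and the paper's quadrant/double-counting argument are equivalent; yours makes the ``four constraints satisfied exactly when both endpoints are chosen'' phenomenon more explicit at the cost of a $3\times 3$ case check, while the paper's sidesteps the case analysis via the overlapping $\le/\ge$ decomposition.
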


\begin{proof}
It is clear that a convenient ordering satisfies all the
${2Dn}\choose{2}$ constraints in $\mathcal{C}_S$. For the satisfied
constraints in $\mathcal{C}_H$, consider any $i\in [n]$ and
$i'\in[n+1,2n]$, and note that the constraints involving
$r_i$ and $r_{i'}$ are of four types: they belong to one of
$\mathcal{C}_H^{crcr}$, $\mathcal{C}_H^{crrc}$,
$\mathcal{C}_H^{rcrc}$, or $\mathcal{C}_H^{rccr}$.

Note that the satisfied constraints from $\mathcal{C}_H^{crcr}$
involving $r_i$ and $r_{i'}$ are exactly the constraints of the form
$(c_k,r_i,c_{k'},r_{i'})$ with $k\in [1,\phi(i)]$ and $k'\in
[n+1,\phi(i')]$. Thus there are $\deg(\mathcal{R}_i^{\le
  \phi(i)},\mathcal{R}_{i'}^{\le \phi(i')})$ such constraints, where
$\mathcal{R}_i^{\le \phi(i)}$ is the set of the $\phi(i)$ first vertices of
the $i^{th}$ row of $H$. Similar arguments on $\mathcal{C}_H^{crrc}$,
$\mathcal{C}_H^{rcrc}$, and $\mathcal{C}_H^{rccr}$ lead to the fact
that the total number of satisfied constraints involving $r_i$ and
$r_{i'}$ is exactly $\deg(\mathcal{R}_i^{\le
  \phi(i)},\mathcal{R}_{i'}^{\le \phi(i')}) + \deg(\mathcal{R}_i^{\le
  \phi(i)},\mathcal{R}_{i'}^{\ge \phi(i')}) + \deg(\mathcal{R}_i^{\ge
  \phi(i)},\mathcal{R}_{i'}^{\ge \phi(i')}) + \deg(\mathcal{R}_i^{\ge
  \phi(i)},\mathcal{R}_{i'}^{\le \phi(i')})$, which equals
$\deg(\mathcal{R}_i,\mathcal{R}_{i'}) +
\deg((i,\phi(i)),\mathcal{R}_{i'}) + \deg(\mathcal{R}_i,(i',\phi(i'))) +
\deg((i,\phi(i)),(i',\phi(i')))$, and which by condition (B) equals
$(n+2)\Delta^{i,i'} +\deg((i,\phi(i)),(i',\phi(i')))$. Summing this for
every $i\in [n]$ and $i'\in [n+1,2n]$ leads to the mentioned value.
\end{proof}

Thus, if an optimal ordering satisfies
${2Dn}\choose{2}$${2n+1}\choose{2}$$+(n+2)\sum\Delta^{i,i'} + n^2$
constraints, there is a convenient ordering (by Claim~\ref{l-sequ})
defining a function $\phi$ such that the graph $H[V_\phi]$ has $n^2$
edges (by Lemma~\ref{l-nb}). Since $H$ is bipartite, this subgraph is
bipartite and the number of edges implies that it is a $K_{n,n}$ (with
one vertex per row, by construction).  Conversely if $H$ has a
$K_{n,n}$ with one vertex per row , one can easily construct a
convenient ordering satisfing
${2Dn}\choose{2}$${2n+1}\choose{2}$$+(n+2)\sum\Delta^{i,i'} + n^2$
constraints.  This concludes the proof of the theorem.
\end{proofof}

Finally, up to our best knowledge, no $O^*(c^n)$-time algorithm for {\sc Permutation CSP} is known for $c<2$ even when the arity is two. Recently, Cygan et al. proved in \cite{CDL11} that for a number of basic problems including {\sc Hitting Set}, improving upon the $2^n$-barrier contradicts \texttt{Strong Exponential Time Hypothesis (SETH)} \cite{IP01}. Although SETH is not as widely believed as ETH, proving or disapproving SETH will be a major breakthrough in the domain. We leave it as an open problem whether it is possible to break the $2^n$-barrier for arity up to three.

\bibliographystyle{alpha}

\end{document}